\newtheorem{theorem}{\bf Theorem}
\newtheorem{property}{\bf Property}
\newtheorem{proposition}{Proposition}
\newtheorem{definition}{\bf Definition}
\newtheorem{remark}{\bf Remark}
\newlength{\aligntop}
\newlength{\alignbot}
\renewenvironment{align}{%
  \vspace{\aligntop}
  \start@align\@ne\st@rredfalse\m@ne
}{%
  \math@cr \black@\totwidth@
  \egroup
  \ifingather@
    \restorealignstate@
    \egroup
    \nonumber
    \ifnum0=`{\fi\iffalse}\fi
  \else
    $$%
  \fi
  \ignorespacesafterend%
  \vspace{\alignbot}\par\noindent
}
\newcommand{\mysection}[1]{\vspace*{-0.47em}\section{#1}\vspace*{-0.49em}}
\begin{document}

\title{{\huge Coalition Formation Games for Distributed  Cooperation Among Roadside Units in Vehicular Networks}}
\author{\authorblockN{Walid Saad$^\textbf{1}$, Zhu Han$^\textbf{2}$,  Are Hj{\o}rungnes$^\textbf{1}$, Dusit Niyato$^\textbf{3}$, and Ekram Hossain$^\textbf{4}$} \authorblockA{\small
$^\textbf{1}$UNIK - University Graduate Center, University of Oslo, Kjeller, Norway, email: \url{{saad, arehj}@unik.no}\\
$^\textbf{2}$ Electrical and Computer Engineering Department, University of Houston, Houston, USA, email: \url{zhan2@mail.uh.edu}\\
$^\textbf{3}$ School of Computer Engineering, Nanyang Technological University (NTU), Singapore, \url{dniyato@ntu.edu.sg}\\
$^\textbf{4}$ Department of Electrical and Computer Engineering at University of Manitoba, Canada , \url{ekram@ee.umanitoba.ca}\vspace{-1.05cm}
 }%
   \thanks{This work was supported by the Research Council of Norway through the project 183311/S10 and NSF grants CNS-0953377, CNS-0905556, and CNS-0910461. The work of E. Hossain was supported by the AUTO21 NCE research grant for the project F303-FVT.}}
\date{}
\maketitle

\begin{abstract}Vehicle-to-roadside (V2R) communications enable vehicular networks to support a wide range of applications for enhancing the efficiency of road transportation. While existing work focused on non-cooperative techniques for V2R communications between vehicles and roadside units (RSUs), this paper investigates novel cooperative strategies among the RSUs in a vehicular network. We propose a scheme whereby, through cooperation, the RSUs in a vehicular network can coordinate the classes of data being transmitted through V2R communications links to the vehicles. This scheme improves the diversity of the information circulating in the network while exploiting the underlying content-sharing vehicle-to-vehicle communication network. We model the problem as a coalition formation game with transferable utility and we propose an algorithm for forming coalitions among the RSUs. For coalition formation, each RSU can take an individual decision to join or leave a coalition, depending on its utility which accounts for the generated revenues and the costs for coalition coordination. We show that the RSUs can self-organize into a Nash-stable partition and adapt this partition to environmental changes. Simulation results show that, depending on different scenarios, coalition formation presents a performance improvement, in terms of the average payoff per RSU, ranging between $20.5\%$ and $33.2\%$, relative to the non-cooperative case.
\end{abstract}
{\bf Keywords:} Vehicle-to-roadside communications, coalitional game theory, coalition formation game, vehicular networks.
\vspace{-0.1cm}
\mysection{Introduction}
Recent advances in the integration of communication and sensor technologies have triggered the deployment of numerous attractive applications for road transportation systems. In this regard, networks of connected vehicles constitute the main building block of intelligent transportation systems~(ITS) and are the basis for a diversity of applications that can enhance the safety and comfort of road transportation (e.g., through providing road traffic condition, remote vehicle monitoring, accident prevention, payment services, security applications etc.) \cite{VAPP}. In order to support different ITS applications, both vehicle-to-roadside~(V2R) communications and vehicle-to-vehicle~(V2V) communications need to be supported in vehicular networks. On one hand, V2R communications allow the vehicles to connect, through their on-board units~(OBUs), to the roadside units~(RSUs) belonging to one or several service providers, in order to download (or upload) various types of data related to a variety of applications. On the other hand, V2V communications enable a group of vehicles to communicate and exchange information for different purposes.

Existing work has already explored various aspects of V2R and V2V communications. For instance, in \cite{V2R00}, a low-complexity scheme for packet scheduling for downlink and uplink transmissions over V2R communication links (between an RSU and multiple OBUs) is proposed. The authors in \cite{V2R01} propose an IEEE 802.16-based protocol for data communication between a cluster of vehicles and an RSU. Further, an experimental testbed for traffic congestion detection and emergency warning using V2R and V2V communication is presented in \cite{V2V00}. In \cite{V2V01}, the authors propose a non-cooperative Bit Torrent-based approach for data distribution between the RSUs and the OBUs of the vehicles as well as a Nash bargaining solution for V2V data exchange. Multiple antenna techniques are proposed in \cite{V2R02} for enhancing the performance of V2R communications. The work in \cite{V2V02} studies a protocol, using time sharing for inter-vehicle message delivery with short and deterministic delay bounds in a V2V ad hoc network. The objective of the work in \cite{V2V02} is to improve the safety of vehicular networks. In addition, the work in \cite{V2V04} proposes an effective protocol, comprising congestion control policies, mechanisms for service differentiation, and methods for emergency warning dissemination using V2V communications. Further, the use of V2V communications and cooperation among sensor-equipped vehicles is studied in \cite{V2V03} for proactive urban data monitoring. Other aspects of V2R and V2V communication such as routing, security, channel modeling, and authentication are studied in \cite{OTHER00,OTHER01,OTHER02,OTHER03,NEW01,NEW02,NEW03}.

Most  of the existing work in vehicular networks have focused on communication technologies for V2R or V2V communication, content-sharing through V2V cooperation as well as on non-cooperative data delivery between the RSUs and the OBUs of the vehicles through V2R communications. Nonetheless, one challenging aspect of vehicular networks that remains unexplored is the design of cooperative strategies among \emph{the RSUs} for improving the diversity of the data circulating in the network as well as for exploiting the data exchange capabilities of the underlying V2V networks.  By exploring the possibilities of content-sharing between the vehicles, the RSUs in a vehicular network can cooperate in order to coordinate the classes of data that they will transmit to their served vehicles. For example, instead of non-cooperartively sending information on the traffic of the same geographical location to their served vehicles, two RSUs can cooperate to send information on the traffic conditions at different locations, and, subsequently, rely on the V2V data exchange for disseminating this data to all the vehicles traveling between them. Therefore, by using an efficient V2V data exchange protocol, all the vehicles moving between the two RSUs will acquire traffic information on different geographical areas without the need for passing by multiple RSUs (for example). By doing so, the RSUs can obtain more revenues from the vehicles since they are providing them with a more diverse amount of information through cooperation. Further, from the vehicles' perspective, due to the short duration that a vehicle spends at an RSU \cite{VAPP,V2R00,V2R01,V2V01}, it is common that a vehicle merely has time to download a limited number of chunks or packets, e.g., related to a single class of data. By enabling cooperation among the RSUs, the vehicles will be able to obtain more diverse classes of data,  by, for example, downloading a class from one RSU and engaging in V2V communication with other vehicles that obtained a different class of data from other cooperative RSUs. To the best of our knowledge, no existing work has studied this cooperation problem among RSUs, notably from a game theoretical perspective.

The main contribution of this paper is  a novel cooperation protocol that enables the RSUs in a vehicular network to maximize the revenues they obtain from the data they convey to their served vehicles. Through cooperation, the RSUs can diversify the classes of data that they transmit to their served vehicles, depending on the content-sharing possibilities of the underlying V2V network connecting the vehicles circulating between them.  We model the problem as a coalition formation game among the RSUs, and we propose a distributed algorithm for forming the coalitions. Through the proposed algorithm, each RSU can take a distributed decision to leave its current coalition and join a new one while maximizing its utility which accounts for the gains in terms of the total revenue generated from the data transmitted to the vehicles as well as the costs for coordination inside the coalition. We show that, by using the proposed coalition formation algorithm, the RSUs can self-organize into a Nash-stable partition, as well as adapt this partition to environmental changes such as a change in the average vehicle traffic passing by each RSU. Simulation results show that, depending on the scenario, coalition formation allows the RSUs to self-organize while improving their average payoff between $20.5\%$ and $33.2\%$, relative to the non-cooperative case.

The remainder of this paper is organized as follows: Section~\ref{sec:sysmodel} presents the proposed system model. In Section~\ref{sec:gmodel}, we model the problem of cooperation among RSUs as a transferable utility coalitional game and propose a suitable utility function. In Section~\ref{sec:gform}, we classify the proposed coalitional game as a coalition formation game, we discuss its key properties, and we introduce the algorithm for coalition formation. Simulation results are presented in Section~\ref{sec:sim}. Finally, conclusions are drawn in Section~\ref{sec:conc}.
\mysection{System Model}\label{sec:sysmodel}

Consider a network consisting of $N$ RSUs and let $\mathcal{N}$ denote the set of all RSUs. We denote by $\mathcal{C}=\{c_1,\ldots,c_L\}$ the set of cardinality $|\mathcal{C}| = L$ which represents the classes of data that can be distributed by any RSU $i \in \mathcal{N}$. Further, each RSU $i \in \mathcal{N}$ engages in a V2R communication with an average of $K_{ij}$ vehicles that enter into the network, pass by this RSU, and move towards RSU $j\in \mathcal{N},\ i \neq j$. We let $\mathcal{K}_{ij}$ denote the set containing $K_{ij}$ vehicles moving from any RSU $i$ towards any RSU $j$. Each vehicle $k \in \mathcal{K}_{ij}$ passing by an RSU $i \in \mathcal{N}$, connects to this RSU for a period of time and downloads an average of $P_{k,i}$ chunks of data. The data downloaded by vehicle $k$ belongs to class  $c_l^{ij} \in \mathcal{C}$ transmitted by RSU $i$ to all vehicles traveling in the direction of RSU $j$. Due to the short period that a vehicle can spend at an RSU, we consider that each RSU selects only \emph{one} class of data from $\mathcal{C}$ to transmit, at a time, to the vehicles in a given direction. Further, for multiple access, any scheme can be adopted by the vehicles as long as it ensures that they are able to successfully download the required data from the RSUs.

A payment operator, i.e., an entity which takes care of collecting payments on behalf of the RSUs, charges every vehicle $k$, after passing by its \emph{first} RSU and prior to meeting its next RSU (or exiting the network), a fee proportional to the total amount of data that this vehicle carries at that time. This amount of money is then distributed by the operator to the RSU(s) that transmitted the data obtained by vehicle $k$. For each class of data $c_k$, there is a corresponding priority (weight) $w_{c_k} \le 1$ which quantifies the importance of this data for the vehicular network. As the weight of the data increases, the operator charges the vehicles a higher price for receiving this data. For convenience, we consider that the data classes in the set $\mathcal{C}$ are ordered such that $w_{c_1} > w_{c_2} > \ldots > w_{c_L}$. Note that, although in some situations, the RSUs may also need to pay some stipulated fees for the underlying vehicular network, throughout this paper, we consider that this fee is a one time payment (for example, collected periodically every month or year) that will not affect the continuous revenue stream that the RSUs receive through the packets downloaded by the vehicles. Nonetheless, in scenarios where the fee paid by the RSUs is quite considerable, one can still apply the  approach proposed in the remainder of this paper.

In the considered network, we assume the presence of a V2V content-sharing scheme which allows the vehicles to communicate and exchange data among each other when possible.  However, in a non-cooperative approach, as is often the case, we consider that the RSUs are not aware of this underlying V2V content-sharing network. The motivation behind this assumption is that, without coordination, the RSUs cannot estimate the fraction of vehicles that can potentially meet and share content, nor the amount of vehicles moving in their direction, and, consequently, they are unaware of the vehicle-to-vehicle content-sharing that can potentially occur. Therefore, in a non-cooperative scenario, it is beneficial for any RSU $i\in \mathcal{N}$ to transmit the chunks of data related to the class $c_1$ with highest priority for all the vehicles passing by this RSU, in order to maximize its revenue\footnote{This is also the most beneficial for the vehicle since, once it enters the network and meets its first RSU, it needs to get the most important class of data.}. Consequently, the non-cooperative utility of any RSU $i\in \mathcal{N}$ can be given by
\begin{equation}\label{eq:ncutil}
u(\{i\}) = \beta\cdot \sum_{\underset{j \neq i}{j \in \mathcal{N}}}\sum_{k \in \mathcal{K}_{ij}} (w_1 \cdot P_{k,i})
\end{equation}
where $w_1P_{k,i}$ represents the effective worth of the data downloaded by vehicle $k$ and $\beta$ is the price charged by the payment operator for one unit of effective data. For example, for an RSU $i$ that is serving the same number of vehicles $K$ in every direction and where each vehicle is downloading the same amount of data $P$, (\ref{eq:ncutil}) becomes $\beta \cdot (N-1) K w_1 P$.

For increasing their revenues, the RSUs can cooperate and exploit the \emph{underlying V2V content-sharing network}. We limit our attention to pairwise V2V content-sharing algorithms such as those in \cite{VAPP} or \cite{V2V01} whereby V2V communication occurs between \emph{pairs} of vehicles that can meet and exchange their data. Further, we consider that this content-sharing occurs prior to the collection of the fees by the operator. Hence, between every pair of RSUs $i \in \mathcal{N}$ and $j\in \mathcal{N},\ i \neq j$, we consider that a certain average number of \emph{pairs of vehicles} can meet and engage in V2V content sharing.  We denote this number by $m_{ij}(d_{ij})$ such that: (i) Each pair of content-sharing vehicles consists from a vehicle moving from $i$ to $j$ which is able to meet up with another vehicle moving from $j$ to $i$, (ii) $m_{ij}(d_{ij})$ is a decreasing function of the distance $d_{ij}$ between RSUs $i$ and $j$ (as the RSUs are more distant, it becomes less likely that the pairs of vehicles circulating between them would meet), and (iii) The maximum number of pairs of vehicles that can meet on a link between an RSU $i$ and an RSU $j$ corresponds to the minimum between $K_{ij}$ and $K_{ji}$, i.e., $m_{ij} \le \min{(K_{ij},K_{ji})}$. The set of vehicles initiating at any RSU $i\in \mathcal{N}$ that will potentially meet other vehicles initiating at any RSU $j\in \mathcal{N}$ over the path between $i$ and $j$ is denoted by $\mathcal{M}_{i}$ (note that, for a path between an RSU $i$ and an RSU $j$, $|\mathcal{M}_i| + |\mathcal{M}_j| = m_{ij}(d_{ij})$). To evaluate the average number of pairs of vehicles $m_{ij}$ that can meet between any two RSUs $i$ and $j$, one can use the following expression:
\begin{equation}\label{eq:conmod}
m_{ij} = \delta^{d_{ij}}\cdot \min{(K_{ij},K_{ji})}
\end{equation}
where $0 \le \delta \le 1$ represents the fraction of vehicles that can meet when the RSUs are distant of $1$~km. The model in (\ref{eq:conmod}) is inspired from the well-known connections model used in network formation games \cite{JA01} for highlighting the friendship relationships among individuals as a function of their distance. Note that, other models for $m_{ij}$ can also be accommodated.

Consequently, given any group of cooperating RSUs, i.e., any coalition $S\subseteq \mathcal{N}$, the RSUs inside $S$ can perform a cooperative protocol composed of the following steps:
\begin{enumerate}
\item Every pair of RSUs $i\in S, j\in S,\ i\neq j$ can communicate over the infrastructure to inform each other of the average number of vehicles $K_{ij}$ that are circulating from RSU $i$ to the direction of RSU $j$ as well as the average number of vehicles $K_{ji}$ moving from RSU $j$ in the direction of RSU $i$. In addition, RSUs $i$ and $j$ can exchange the distance between them and additional information on the behavior of the traffic circulating between them.
    \item Using the exchanged information, every pair of RSUs $i\in S, j\in S,\ \forall i,j \in S\ \textrm{s.t. } i\neq j$ can estimate the number $m_{ij}(d_{ij})$ of pairs of vehicles  that can potentially meet on the path between them.
\item Following the information exchange, the members of coalition $S$ can coordinate the classes of data that each RSU $i \in S$ needs to transmit to its served vehicles. The classes are agreed upon by the coalition members in a way to maximize the amount and diversity of the data received by the vehicles through joint V2R (data downloaded from the RSUs by the vehicles) and V2V (data exchanged between every pair of vehicles that meet) communication. This coordination is limited to selecting the classes of data sent between any two RSUs $i \in S$ and $j \in S$ \emph{inside the same coalition}. For vehicles moving from an RSU $i \in S$ in the direction of RSU $k \in \mathcal{N} \setminus S$ outside $S$, RSU $i$ will still transmit the data of class $c_1$, i.e., highest priority data.
\end{enumerate}

Using this cooperative protocol, the RSUs that are members of any coalition $S$ are able to maximize the revenue they receive from the operator by exploiting the existence of a diversity of data classes as well as the presence of a V2V content-sharing network. Note that, for distinguishing between packets received from RSUs and packets received from other vehicles both RSUs and vehicles can append a small header on each packet in order to indicate its origin. Such an approach is commonly used to distinguish among packets transmitted from different sources \cite{VAPP,PA00}.

Hereafter, we consider that any vehicle moving from an RSU $i \in \mathcal{N}$ to an RSU $j\in \mathcal{N}$ is charged by the network after engaging in V2V communication but right before reaching RSU $j$ (or exiting the network), and, thus, the content that can potentially be downloaded from RSU $j$ is out of the scope of the current model. For instance, the objective of this paper is to highlight the potential benefits that cooperation among the RSUs can yield for the network. Hence, this work emphasizes how exploiting cooperation among the RSUs, even it occurs only during the initial communication stages, can potentially yield important and interesting benefits for the RSUs (more revenues generated by exploiting the underlying V2V network) and the vehicles (more diverse traffic after passing through their first RSU)   as well. Certainly, one can see that coalition formation would still be beneficial when considering the data downloaded by the vehicles at multiple RSUs. For example, if two RSUs cooperate  using our approach while taking into account the data downloaded by the vehicles at \emph{both} RSUs and exploiting the underlying V2V possibilities, they can allow  each vehicle traveling between them to receive three classes of data (one at the first RSU, one through V2V, and one at the second RSU) instead of only two in the non-cooperative case (one at the first RSU and one at the second RSU since the non-cooperative RSUs do not coordinate their data classes based on the underlying V2V network). Thus, coalition formation for the case where the vehicle can download data from multiple RSUs is not considered in this paper but it can be easily accommodated in future work. For example, one approach to do this would be to allow the RSUs to re-engage in the coalition formation algorithm proposed in this paper, periodically, given an estimate of the traveling routes of the vehicles.

Finally, in order to better illustrate the proposed cooperation protocol and model, consider the case of $N=2$ RSUs with an average of $K_{12}=2$~vehicles moving from RSU $1$ in the direction of RSU $2$ and $K_{21}=2$~vehicles moving from RSU $2$ in the direction of RSU $1$. We assume that the total number of packets downloaded by each vehicle from its serving RSU is $1$. Further, we consider that the two RSUs are close enough such that $m_{12}=2$~pairs. In a non-cooperative system, RSU $1$ (RSU $2$) is not aware of the traffic coming from RSU $2$ (RSU $1$) and thus cannot estimate $m_{12}$. Consequently, both RSUs $1$ and RSU $2$ send packets of the highest priority, i.e., $c_1$ and, by setting $w_{c_1}=0.6$, their utilities are  $v(\{1\}) = v(\{2\})=1.2\beta$, respectively, as given by (\ref{eq:ncutil}). By cooperating and forming coalition $S=\{1,2\}$, the two RSUs can increase the revenue they receive. For example, when coalition $S$ forms, the two RSUs can decide that RSU $1$ sends data of class $c_1$ while RSU $2$ sends data of class $c_2$ with weight $w_{c_2} = 0.5$. Due to the V2V content-sharing among the vehicles traveling between the two RSUs (which are members of $S$), prior to reaching the next RSU, the total effective data received by each vehicle is $\beta(w_{c_1} + w_{c_2})$. Consequently, the total revenue generated from an average of $4$ vehicles which are traveling between two RSUs is $4\beta(w_{c_1}+w_{c_2})=4.4\beta$ (since $m_{12}=2$~pairs with $K_{12}=K_{21}=2$, all pairs of vehicles moving in opposite directions between the two RSUs will meet and engage in pairwise V2V content sharing). If the payment operator divides this revenue equally between the members of $S$ and assuming that the RSUs pay no cost for cooperation, then, each RSU receives $2.2\beta$ which is a clear improvement over the non-cooperative utilities.

In a nutshell, by exploiting the underlying V2V content-sharing capabilities the RSUs in a vehicular network can improve the revenues they obtain from  V2R communication with their served vehicles. An illustrative example of a sample network structure is shown in Fig.~\ref{fig:ill} for $N=5$~RSUs and $L=3$ data classes. In this figure, nearby RSUs with relatively high traffic among them form a cooperative coalition while coordinating the classes of data transmitted in a way to exploit the potential V2V content-sharing that can take place.

\begin{figure}[!t]
\begin{center}
\includegraphics[width=80mm]{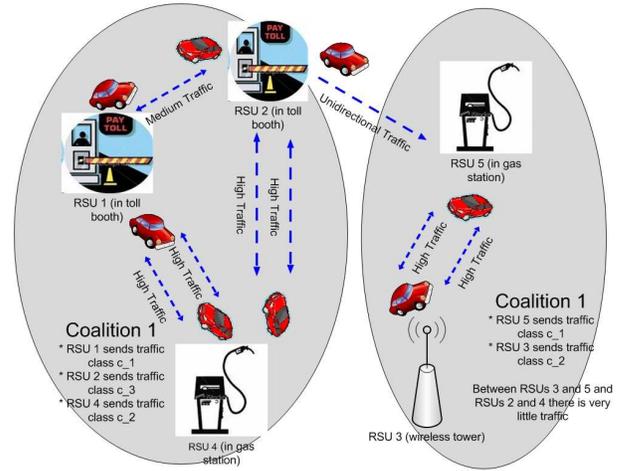}
\end{center}\vspace{-0.6cm}
\caption {An illustrative example of coalition formation for cooperation among RSUs for a network with $N=5$ RSUs and $L=3$ data classes.} \label{fig:ill}
\end{figure}

In the next section, we formulate the cooperation problem among RSUs, develop a suitable utility function, and model the problem as a coalition formation game with transferable utility.

\section{Coalitional Game for Cooperation among RSUs}\label{sec:gmodel}\vspace{-0.1cm}

For mathematically modeling the cooperation problem among the RSUs of a vehicular network, we use coalitional game theory \cite{Game_theory2,Game_theory1,WS00}. In particular, this problem is modeled as a coalitional game with a transferable utility  \cite[Chap. 9]{Game_theory2}:
\begin{definition}
A coalitional game with \emph{transferable utility} is defined by a pair $(\mathcal{N},v)$ where $\mathcal{N}$ is the set of players and $v$ is a function over the real line such that for every coalition $S \subseteq \mathcal{N}$, $v(S)$ is a real number describing the amount of utility that coalition $S$ receives and which can be distributed in any arbitrary manner among the members of $S$.
\end{definition}

For the problem of cooperation among RSUs, given any coalition $S \subseteq \mathcal{N}$, we define $\mathcal{B}_S=\{b_1,\ldots,b_{|S|}\}$ as the tuple with every element $b_i$ representing a class of data in $\mathcal{C}$ selected by an RSU $i \in S$, assumed to be the same for the vehicles starting from RSU $i$ and moving on all directions to RSUs \emph{inside coalition} $S$. Let us denote by $\mathfrak{B}_S$ the family of all such tuples for coalition $S$ which corresponds to the family of all permutations, with repetition, for the RSUs in $S$ over the data classes in $\mathcal{C}$. Note that, as previously mentioned, for the vehicles moving between an RSU $i \in S$ and any RSU $l \in \mathcal{N}\setminus S$ outside coalition $S$, RSU $i$ will always use the class of data with the highest priority, i.e., $c_1$. Consequently, by adopting the cooperative protocol described in the previous section, the total revenue generated by any coalition $S\subseteq \mathcal{N}$ is given by
\begin{align}\label{eq:rev}
u(S) = \!\max_{\mathcal{B}_S \in \mathfrak{B}_S}{\left(\beta\sum_{i\in S}\left(\sum_{j \in \underset{j \neq i}{\mathcal{N}\setminus S}}\sum_{k \in \mathcal{K}_{ij}} (w_{c_1} \cdot P_{k,i}) +\right.\right.}\nonumber \\ {\left.\left.\sum_{\underset{j \neq i}{j \in S}}\sum_{k \in \mathcal{K}_{ij}\setminus\mathcal{M}_{i}}\!\!\!(w_{b_i} \cdot P_{k,i}) + \sum_{\underset{j \neq i}{j \in S}}\sum_{k \in \mathcal{M}_{i}} \left((w_{b_i} + w_{b_j} )\cdot P_{k,i}\right) \right)\right)}
\end{align}
where $w_{c_1}$ represents the weight associated with data class $c_1$ of highest priority and $w_{b_i}$ represents the weight of the data transmitted by RSU $i \in S$ when this RSU selects data class $b_i \in C$ as per tuple $\mathcal{B}_S^{*}$ that maximizes $v(S)$. Further, $\mathcal{M}_i$ represents the set of vehicles starting from RSU $i \in S$ that will engage in V2V content-sharing with the vehicles in the set $\mathcal{M}_j$, i.e., the vehicles initiating at RSU $j\in S$.

The revenue function in (\ref{eq:rev}) represents the total payment that a coalition of RSUs receives from the operator, after the vehicles traveling among the RSUs in $S$  pay the operator prior to reaching their next RSU or, even, exiting the network. For an RSU $i\in S$, the first term in (\ref{eq:rev}) represents the revenue from the data traffic between $i$ and all RSUs \emph{external} to $S$ (that do not rely on any V2V communication). The second term represents the total revenue from the vehicles originating from RSU $i$ and which are unable to meet any vehicle moving from any other RSU $j\in S$. In other words, the second term is the revenue from the data sent to vehicles that travel between two RSUs inside the same coalition without engaging in any V2V content-sharing. Finally, the third term represents the gains from the data received by all vehicles passing by RSU $i \in S$ and which are able to engage in V2V pairwise content-sharing with other vehicles moving from other members of $S$ towards RSU $i$. Hence, the third term in (\ref{eq:rev}) quantifies the gains that the RSUs can receive from exploring the existing V2V communications network\footnote{As previously stated, all V2V communications for content-sharing is considered in a pairwise manner.}.

Note that,  in both (\ref{eq:ncutil}) and (\ref{eq:rev}), we consider that any vehicle $k$ is able to receive successfully the $P_{k,i}$ chunks of data related to a certain class transmitted by RSU $i$. In practice, the actual number of packets successfully received would effectively be less than $P_{k,i}$ due to the wireless channel fading which accounts for the Doppler effect. In order to take this effect into account, in (\ref{eq:ncutil}) and (\ref{eq:rev}), the number of packets received  should be multiplied by a term corresponding to the probability of successful transmission (or probability of error) in presence of channel fading that is a function of the Doppler frequency. Nonetheless, the advantage of cooperation would still hold because whether the effect of channel fading/Doppler is accounted for or not, cooperation will allow the RSUs to exploit the underlying V2V network and, thus, improve the overall performance.


Although cooperation can yield additional revenues for the RSUs as per (\ref{eq:rev}), these gains are limited by inherent costs that need to be paid by the RSUs when acting cooperatively. These costs can be captured by a cost function $c(S)$ which will limit the gains from cooperation obtained in (\ref{eq:rev}). Although the analysis in the remainder of this section can be applied for any type of cost functions, we consider a cost function that varies linearly with the size of the coalition as follows
\begin{equation}\label{eq:lincost}
c(S) = \begin{cases} \alpha  \cdot |S|, & \mbox{if } |S| > 1\\
0, & \mbox{otherwise}\end{cases}
\end{equation}
where $\alpha$ is a pricing factor. The main motivation behind this cost function is that, in order to achieve the revenue function in (\ref{eq:rev}), the RSUs belonging to a single coalition $S$ need to synchronize their communication and maintain an open channel between them to exchange information, determine the classes that need to be sent by each RSU, and so on. Moreover, when the RSUs belong to different service providers, they might be required to pay a certain fee for coordinating their cooperative behavior over each others infrastructure. Consequently, for every coalition $S\subseteq \mathcal{N}$, the RSUs need to pay a cost for coordination which is an increasing function of the coalition size such as in (\ref{eq:lincost}). In fact, for the proposed cooperation protocol one can easily see that, as the number of RSUs in a coalition increases, coordinating the classes for maximizing the revenue becomes more complex and consequently yields additional costs.

Consequently, given the revenue in (\ref{eq:rev}) and the cost in (\ref{eq:lincost}), the net value that any coalition $S$ can receive is given by
\begin{equation}\label{eq:util}
v(S)=u(S) - c(S).
\end{equation}
This value function quantifies the effective revenue that a coalition $S$ of RSUs will receive  given the gains and costs from cooperation. One can easily see that, for a singleton coalition, i.e., $S=\{i\}$, the utility function in (\ref{eq:util}) is reduced to the non-cooperative utility function in (\ref{eq:ncutil}).

For any coalition $S \subseteq \mathcal{N}$, the utility in (\ref{eq:util}) represents the amount of money received by this coalition, and, thus, it can be arbitrarily apportioned among the members of $S$. Therefore, the utility function in (\ref{eq:util}) is considered as a \emph{transferable utility} \cite{Game_theory2}. Consequently, we immediately have the following property:
\begin{property}
The proposed RSUs cooperation problem is modeled as a coalitional game with transferable utility $(\mathcal{N},v)$ where $\mathcal{N}$ is the set of RSUs and $v$ is the value function given by (\ref{eq:util}).
\end{property}

In the proposed coalitional game among RSUs, we denote by $\phi_i(S)$ the \emph{payoff} received by an RSU $i \in S$ after dividing the utility in (\ref{eq:util}) using any payoff division rule and we let $\boldsymbol{\phi}(S)$ be the vector of all payoffs inside coalition $S$.  In this context, the payoff vector $\boldsymbol{\phi}(S)$ is said to be \emph{individually rational} if the
RSUs in $S$ can obtain a benefit in $S$ that is no less than their benefit when acting non-cooperatively, i.e. $\phi_{i}(S) \ge v(\{i\}),\forall i \in S$. In this paper, we adopt an individually rational egalitarian rule for payoff division whereby the \emph{extra} utility (benefit) is divided \emph{equally} among the members of the same coalition. In other words, the payoff $\phi_i(S)$ of any  $i \in S$ is given by
\begin{align}\label{eq:ega}
\phi_i(S)=\frac {1}{|S|} \left( v(S)-\sum_{j\in S} v(\{j\})\right)+v(\{i\})
\end{align}
where $v(\{i\})$ and $v(\{j\})$ are the non-cooperative payoffs of RSU $i$ and RSU $j$. Note that, unlike equal fairness, the egalitarian payoff division rule does not imply dividing the entire utility equally but rather the \emph{extra} benefits equally while conserving individual rationality. Note that, other fairness rules such as the Shapley value or the nucleolus \cite{Game_theory2,Game_theory1,WS00} can also be used.

By carefully inspecting the value function in (\ref{eq:util}), we can state the following remark:
\begin{remark}
In the proposed $(\mathcal{N},v)$ coalitional game, any coalitional structure may form in the network and the grand coalition, i.e., the coalition of all RSUs, is seldom beneficial due to the increasing cooperation costs. Subsequently, the proposed coalitional game among RSUs is classified as a \emph{coalition formation game} \cite{WS00}.
\end{remark}

By carefully inspecting $v(S)$ in (\ref{eq:util}) and through the cost function in (\ref{eq:lincost}) it is clear that as the number of RSUs in a coalition increases, the cost for cooperation increases, and, thus, the gain from cooperation becomes limited by this increasing cost. Consequently, although the grand coalition might form under favorable conditions (e.g., when the cost of cooperation is low relative to the benefits from cooperation), in general, the final network structure is composed of a number of independent disjoint coalitions of RSUs. Hence, the traditional concepts used for solving coalitional games, e.g., the core \cite{Game_theory2}, may not be applicable \cite{WS00}. In brief, the proposed coalitional game among RSUs is an $(\mathcal{N},v)$ \emph{coalition formation game} \cite{WS00} whereby the main goal is to devise an algorithm for allowing the RSUs  to form coalitions such as those shown in  Fig.~\ref{fig:ill} while taking into account both the gain and cost of cooperation.

 \vspace{-0.5cm}

\section{Coalition Formation Algorithm}\label{sec:gform}
In this section, first, we introduce some concepts from coalition formation games, and, then, we devise an algorithm for coalition formation among RSUs in a vehicular network.

\subsection{Coalition Formation Concepts}

As has been mentioned before, the proposed cooperation model entails the formation of disjoint coalitions, and, hence, the proposed game is classified as a coalition formation game. In fact, coalition formation has been a topic of high interest in game theory \cite{CF00,CF01,HC00,HC01,HC02}. One key approach for forming coalitions is to enable the players to join or leave a coalition based on well-defined \emph{preferences}. Such a preference-based approach for coalition formation is the basis of many existing coalition formation concepts such as the merge-and-split algorithm \cite{KA01} or hedonic games \cite{HC00,HC01,HC02}. In this context, we introduce some definitions from coalition formation games, taken from \cite{HC00}.
\begin{definition} A \emph{coalitional structure} or a \emph{coalition partition} is defined as the set $\Pi = \{S_1,\ldots,S_l\}$ which partitions the RSUs' set $\mathcal{N}$, i.e.,  $ \forall\ k\ ,S_k \subseteq \mathcal{N}$ are disjoint coalitions such that $\cup_{k=1}^{l}S_k = \mathcal{N}$ (an example of a partition $\Pi$ composed of $2$ coalitions is shown in Fig.~\ref{fig:ill}).
 \end{definition}
 \begin{definition}
 For any RSU $i\in \mathcal{N}$, given a network partition $\Pi$, we denote by $S_{\Pi}(i)$, the coalition $S_k \in \Pi$, such that $i \in S_k$.
 \end{definition}

For performing preference-based coalition formation in the proposed game, each RSU must build preferences over its own set of possible coalitions. In other words, based on which coalition an RSU prefers to being a member of, the RSU must be able to compare and order its potential coalitions. For evaluating these preferences of the RSUs over the coalitions, we use the concept of a preference relation or order as follows \cite{HC00}:
\begin{definition}
For any RSU $i\in \mathcal{N}$, a \emph{preference relation} or \emph{order} $\succeq_i$ is defined as a complete, reflexive, and transitive binary relation over the set of all coalitions that RSU $i$ can possibly form, i.e., the set $\{S_k \subseteq \mathcal{N} : i \in S_k\}$.
\end{definition}

Hence,  for any given RSU $i\in \mathcal{N}$,  $S_1\succeq_i S_2$, implies that RSU $i$ prefers being a member of coalition $S_1 \subseteq \mathcal{N} $ with $i \in S_1$ over coalition $S_2 \subseteq \mathcal{N} $ with $i \in S_2$,  or at least, $i$ prefers both coalitions equally. The asymmetric counterpart of $\succeq_i$, denoted by $\succ_i$, i.e., $S_1 \succ_i S_2$, implies that RSU $i$ \emph{strictly} prefers being a member of $S_1$ over $S_2$. Every coalition formation application can have a different order for quantifying the players' preferences. This order can be a function of several parameters, such as the payoffs that the players receive from each coalition, the approval of the coalition members, and so on. In this coalition formation game, we propose the following preference relation for any RSU $i\in \mathcal{N}$:

\begin{align}\label{eq:prefsu}
S_1 \succeq_i S_2 \Leftrightarrow r_i(S_1) \ge r_i(S_2)\vspace{-2mm}
\end{align}
where $S_1,\ S_2 \subseteq \mathcal{N}$, are any two coalitions containing RSU $i$, i.e., $i \in S_1$ and $i \in S_2$ and $r_i$ is a preference function defined for any RSU $i\in \mathcal{N}$ and any coalition $S$ such that $i\in S$ as follows:
\begin{align}\label{eq:pref1}
r_i(S) = \begin{cases} \phi_i(S), & \mbox{if } \phi_j(S) \ge \phi_j(S \setminus \{i\}),\forall j \in S\setminus\{i\} \\& \&\ S \notin h(i)  \mbox { or } (|S|=1)  \\ -\infty, &\mbox{otherwise} \end{cases}
\end{align}
where $\phi_i(S)$ is the payoff  received by RSU $i$ in coalition $S$ as per the egalitarian fair rule given by (\ref{eq:ega}) and $h(i)$ is a history set where RSU $i$ stores the identity of the coalitions that it visited and then left in the past. Note that $h(i)$ is only applicable to coalitions with size larger than $1$ since it is natural to consider that, at any time, any RSU can always revert to acting non-cooperatively.

Having clearly defined the required coalition formation concepts, the next step is to provide a distributed algorithm, based on the defined preferences, for forming the coalitions.

\subsection{Distributed Coalition Formation Algorithm}
For constructing a coalition formation process, we introduce an algorithm that allows the RSUs to take distributed decisions for selecting which coalitions to join at any point in time. In this regard, we propose the following rule for coalition formation:
\begin{definition}\label{def:switch}
\textbf{Switch Operation -} Given a partition $\Pi=\{S_1,\ldots,S_l\}$ of the set $\mathcal{N}$, any RSU $i \in \mathcal{N}$ decides to leave its current coalition $S_{\Pi}(i)=S_m,\ $ for some $m \in \{1,\ldots,l\}$ and join another coalition $S_k \in \Pi \cup \{\emptyset\},\ S_k \neq S_{\Pi}(i)$, if and only if $S_k \cup \{i\} \succ_i S_{\Pi}(i)$. Hence, $\{S_m,S_k\} \rightarrow \{S_m\setminus\{i\},S_k\cup\{i\}\}$.
\end{definition}

Hence, for every single switch operation made by an RSU $i$, a current partition $\Pi$ of $\mathcal{N}$ is modified into a new partition $\Pi^{\prime}$ such that $\Pi^{\prime} = (\Pi \setminus \{S_m,S_k\}) \cup \{S_m\setminus\{i\},S_k\cup\{i\}\}$. Therefore, for every partition $\Pi$, the switch operation constitutes a mechanism using which any RSU can leave its current coalition $S_{\Pi}(i)$, and join another coalition $S_k \in \Pi$, given that the new coalition $S_k \cup \{i\}$ is strictly preferred over $S_{\Pi}(i)$ through the preference relation defined in (\ref{eq:pref1}). In fact, by carefully inspecting (\ref{eq:pref1}), we note that, an RSU $i$ would perform a switch operation by leaving  $S_{\Pi}(i)$ and joining  $S_k \in \Pi$ if this RSU can \emph{strictly} improve its payoff by joining the new coalition $S_k \cup \{i\}$ \emph{without decreasing} any of the payoffs of the RSUs in coalition $S_k$ and given that RSU $i$ did not join and leave $S_k \cup \{i\}$ in the past. Hence, the switch operation using (\ref{eq:pref1}) is a mechanism through which an RSU decides to join a new coalition given the \emph{consent} of all the members of the coalition. Furthermore, we consider that, whenever an RSU decides to switch from one coalition to another, it updates its history set $h(i)$. Hence, given a partition $\Pi$, whenever an RSU $i$ decides to leave coalition $S_l \in \Pi$ to join another coalition, coalition $S_l$ is stored by RSU $i$ in its history set $h(i)$.

Using the switch operation, we construct a coalition formation algorithm composed of three main stages: Neighbor discovery, distributed coalition formation, and cooperative V2R communication. In the first stage, each RSU uses a neighbor discovery algorithm in order to identify potential candidate RSUs (or coalitions) for cooperation. For discovering their neighbors, the RSUs can utilize several well-known neighbor discovery algorithms (e.g., those used in ad hoc routing discovery \cite{ADHOC00} or wireless networks \cite{ZH00}). The outcome of the neighbor discovery stage is that every RSU becomes aware of its neighbors as well as of the current network structure. Following neighbor discovery, the RSUs can interact with each other for performing cooperation using coalition formation. Thus, the second stage of the algorithm is the coalition formation stage whereby, all the RSUs investigate the possibility of performing a switch operation. In this context, every RSU  investigates its top preferred coalition and decides to perform a switch operation, if possible, through (\ref{eq:pref1}). We assume that the order in which the RSUs make their switch operations is random. For any RSU, a switch operation is easily performed as the RSU can change its coalition membership by leaving its current coalition and joining the new coalition, given the approval of the RSUs in the new coalition as per (\ref{eq:pref1}). The convergence of the proposed coalition formation algorithm during this second stage is guaranteed as follows:
\begin{theorem}\label{th:one}
Starting from any initial coalitional structure $\Pi_{\textrm{initial}}$, the coalition formation stage of the proposed algorithm maps to a sequence of switch operations which will always converge to a final network partition $\Pi_f$ composed of a number of disjoint coalitions.
\end{theorem}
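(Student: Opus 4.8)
The plan is to regard the coalition formation stage as generating a sequence of partitions $\Pi_{\textrm{initial}}=\Pi_0,\Pi_1,\Pi_2,\ldots$, where each $\Pi_{t+1}$ is obtained from $\Pi_t$ by a single switch operation, and then to show that this sequence is necessarily finite. Two facts would be combined. First, the set of all partitions of the finite set $\mathcal{N}$ is itself finite, its cardinality being the Bell number $B_N$. Second, no partition may appear twice in the sequence, so the sequence cannot be longer than $B_N$ and must halt at some $\Pi_f$ at which no RSU can perform a further switch. Since every partition consists of disjoint coalitions, the terminal $\Pi_f$ is exactly the claimed final structure, and termination is all that must be established.

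First I would record the local properties of a single switch that follow directly from (\ref{eq:pref1}). When RSU $i$ switches from $S_{\Pi}(i)$ to $S_k\cup\{i\}$, the switch operation forces $S_k\cup\{i\}\succ_i S_{\Pi}(i)$, which by (\ref{eq:prefsu}) means $r_i(S_k\cup\{i\})>r_i(S_{\Pi}(i))$; as the left-hand side must be finite for the switch to be admissible, both quantities equal the egalitarian payoffs in (\ref{eq:ega}) and the switch strictly raises $\phi_i$. The second and decisive property is that $r_i$ is set to $-\infty$ on every coalition RSU $i$ has already visited and abandoned, since such coalitions are stored in $h(i)$ and excluded in (\ref{eq:pref1}) (the only exception being the singleton $\{i\}$, to which an RSU may always revert). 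Hence an RSU never re-forms a nonsingleton coalition it has previously left.

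I would then assemble these into an acyclicity argument. Suppose, toward a contradiction, that the sequence is infinite. Because there are only $B_N$ partitions, some partition $\Pi$ must recur, giving a cycle $\Pi\to\cdots\to\Pi$ of switch operations. I would track, for each RSU that moves during the cycle, the coalitions it leaves, each departure permanently depositing the abandoned coalition into that RSU's history set. To close the cycle every moving RSU must be returned to precisely the coalition it occupied in $\Pi$, and I would argue that the final switch re-creating $\Pi$ must re-form, for some RSU, a nonsingleton coalition already deposited in its history during the cycle, so that this switch is blocked by the $-\infty$ value in (\ref{eq:pref1}). This contradiction shows no partition recurs, the sequence has length at most $B_N$, and the algorithm converges to $\Pi_f$.

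The hard part will be making the cycle elimination fully rigorous. The delicate point is that an RSU's coalition can change \emph{without} that RSU itself switching, simply because other RSUs join or leave it under the consent clause of (\ref{eq:pref1}); consequently the coalition an RSU records in its history upon departure need not coincide with the coalition it held at the opening of the hypothetical cycle, and the singleton exemption means reversion to $\{i\}$ is never blocked. The crux is therefore to show that, across all RSUs active in the cycle, the accumulated history entries are incompatible with the strict-improvement requirement of the closing switch, i.e. that at least one nonsingleton coalition must be illegally re-formed. I expect to handle this by a minimal-counterexample argument on the cycle length, reducing the general case to the elementary two-coalition interchange, where the strict monotonicity of $\phi_i$ established above already forbids an immediate back-and-forth.
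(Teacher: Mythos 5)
Your skeleton matches the paper's proof (a sequence of switch operations, finiteness of the set of partitions via the Bell number, and the history sets $h(i)$ as the anti-cycling device), but the execution has a genuine gap, located exactly where you flag "the hard part." Your opening claim that \emph{no partition may appear twice} is stronger than what the mechanism guarantees, and it is not true: since reversion to the singleton $\{i\}$ is never blocked by $h(i)$, a previously visited partition can recur. The paper's own proof concedes precisely this, splitting into two cases --- a switch yields either an unvisited partition, or a \emph{previously visited partition containing a non-cooperative (singleton) RSU} --- and then argues separately that revisits of the second kind cannot go on indefinitely, because the singleton RSU must eventually either join a coalition (producing, via (\ref{eq:pref1}), an unvisited partition) or stay alone. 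Any argument, like yours, that tries to derive a contradiction from the mere existence of a recurrence is therefore attacking a false statement; the recurrence through singleton reversions has to be permitted and then shown to be finite, which is a different (and in the paper, admittedly informal) argument.

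The two lemmas you plan to use to close the cycle also fail. First, "every admissible switch strictly raises $\phi_i$" is false: a switch only requires $r_i(S_k \cup \{i\}) > r_i(S_{\Pi}(i))$, and $r_i(S_{\Pi}(i))$ can be $-\infty$, either because the consent condition fails inside the \emph{current} coalition (possible after other RSUs join or leave around $i$, since consent in (\ref{eq:pref1}) is evaluated for the current membership) or because $S_{\Pi}(i) \in h(i)$ (also possible, since $i$'s coalition can shrink to a previously abandoned set without $i$ ever switching). In either case $i$ may legitimately switch to a strictly lower-payoff coalition, in particular to $\{i\}$, so strict monotonicity of $\phi_i$ cannot anchor the base case of your minimal-counterexample induction. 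Second, "an RSU never re-forms a nonsingleton coalition it has previously left" holds only when that RSU is itself the re-joining player: (\ref{eq:pref1}) checks only the \emph{joiner's} history, so a set $C \in h(i)$ with $i \in C$ can be re-created by a different RSU $j$ joining $C \setminus \{j\}$, with $i$ merely consenting. With both pillars gone, and with the crux explicitly deferred ("I expect to handle this by\ldots"), the proposal is a plausible plan but not a proof; to repair it you should abandon global acyclicity and instead prove the paper's weaker dichotomy plus a bound on how many times singleton-containing partitions can recur.
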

\begin{proof}
For the purpose of this proof, we denote by $\Pi_{n_{k}}^{k}$ the partition formed during the turn $k$ of any RSU $i \in \mathcal{N}$ after the occurrence of $n_{k}$ switch operations (the index $n_{k}$ denotes the number of switch operations performed by the RSUs that made their coalition decisions up to the turn of RSU $i$). Beginning with any initial starting partition $\Pi_{\textrm{initial}}=\Pi^{1}_{0}$, the coalition formation phase of the proposed algorithm can be easily mapped to a sequence of switch operations. As per definition~\ref{def:switch}, every switch operation transforms the current partition $\Pi$ into another partition $\Pi^\prime$, hence, the sequence of switch operations yields the following transformations (as an illustrative example):
\begin{equation}\label{eq:trans}
\Pi^{1}_{0}=\Pi^{2}_{0}\rightarrow \Pi^{3}_{1} \rightarrow \ldots \rightarrow \Pi^{T}_{n_{T}}
\end{equation}
where the operator $\rightarrow$ indicates the occurrence of a switch operation and $T$ indicates the total number of turns taken by the RSUs. In other words, $\Pi_{n_{k}}^{k} \rightarrow \Pi_{n_{k+1}}^{k+1}$, means that during turn number $k$, an RSU $i$ made a single switch operation which yielded a new partition $\Pi_{n_{k+1}}^{k+1}$ at turn $k+1$. Note that, in (\ref{eq:trans}), the first element $\Pi^{1}_{0}=\Pi^{2}_{0}$ implies that, at the turn of RSU $1$, \emph{no} switch operations occurrs (indicated by  subscript $0$) and, thus, the second partition $\Pi^{2}_{0}$ is still equal to the first partition $\Pi^{1}_{0}$. By inspecting the preference relation defined in (\ref{eq:pref1}), it is easily seen that every single switch operation leads to a partition that has not yet been visited (new partition) or yields a previously visited partition with a non-cooperative RSU (a coalition of size $1$).

In the case where every single switch operations leads to a previously unvisited coalition, and due to the well known fact that the number of partitions of a set is \emph{finite} and given by the Bell number \cite{CF00}, the number of transformations in (\ref{eq:trans}) is finite, and hence the sequence in (\ref{eq:trans}) will always terminate and converge to a final partition $\Pi_f = \Pi^{T}_{n_{T}}$ after $T$ turns. In the case where a previously visited partition with a non-cooperative RSU is visited, starting from this partition, at a certain point in time, the non-cooperative RSU must either join a new coalition and, thus, as per (\ref{eq:pref1}) yield an unvisited partition or decide to remain non-cooperative. Subsequently, the number of re-visited partitions will be limited, and, thus, on all cases, the coalition formation stage of the proposed algorithm will converge to a final network partition $\Pi_f$ composed of a number of disjoint RSUs coalitions, which completes the proof.
\end{proof}

The stability of the final partition $\Pi_f$ resulting from the convergence of the proposed algorithm can be studied using the following stability concept \cite{HC00}:
\begin{definition}
A partition $\Pi = \{S_1,\ldots,S_l\}$ is  \emph{Nash-stable} if $\forall i \in \mathcal{N},\  S_{\Pi}(i) \succeq_i S_k \cup \{i\}$ for all $S_k \in \Pi \cup \{\emptyset\}$.
\end{definition}

This definition implies that, any coalition partition $\Pi$ where no RSU has an incentive to move from its current coalition to another coalition in $\Pi$ or to deviate and act alone is considered as a Nash-stable partition. Moreover, a Nash-stable partition $\Pi$ implies that there does not exist any coalition $S_k \in \mathcal{N}$ such that an RSU $i$ strictly prefers, as per (\ref{eq:pref1}), to be part of $S_k$ over being part of its current coalition (given that the RSUs in $S_k$ do not get hurt by forming $S_k \cup \{i\}$). This is the concept of individual stability, which is formally defined as follows \cite{HC00}:
\begin{definition}
A partition $\Pi = \{S_1,\ldots,S_l\}$ is \emph{individually stable} if there do not exist $i \in \mathcal{N},$ and a coalition $S_k \in \Pi \cup \{\emptyset\}$ such that $  S_k \cup \{i\} \succ_i S_{\Pi}(i)$ and $  S_k \cup \{i\} \succeq_j S_k$ for all $j \in S_k$.
\end{definition}
As already noted, a Nash-stable partition is individually stable \cite{HC00} and, thus, we have the following:
\begin{proposition}\label{prop:one}
Any partition $\Pi_f$ resulting from the coalition formation phase of the proposed algorithm is Nash-stable, and, hence, individually stable.
\end{proposition}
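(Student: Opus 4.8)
The plan is to prove the claim by contradiction, relying directly on the convergence established in Theorem~\ref{th:one}. First I would observe that the final partition $\Pi_f$ is, by construction, a \emph{terminal} point of the sequence of switch operations in (\ref{eq:trans}): once $\Pi_f$ is reached, no RSU is able to perform a further switch, since otherwise the sequence would continue and $\Pi_f$ would not be the partition to which the algorithm converged. The whole strategy reduces to translating this ``no further switch possible'' property directly into the Nash-stability condition.

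Second, suppose for contradiction that $\Pi_f$ is \emph{not} Nash-stable. By the definition of Nash-stability, there would then exist an RSU $i \in \mathcal{N}$ and a coalition $S_k \in \Pi_f \cup \{\emptyset\}$ such that $S_{\Pi_f}(i) \succeq_i S_k \cup \{i\}$ fails to hold. Here I would invoke the fact that the preference relation $\succeq_i$ defined in (\ref{eq:prefsu})--(\ref{eq:pref1}) is \emph{complete}: the failure of $S_{\Pi_f}(i) \succeq_i S_k \cup \{i\}$ is therefore equivalent to the strict preference $S_k \cup \{i\} \succ_i S_{\Pi_f}(i)$. But this is precisely the condition in Definition~\ref{def:switch} that triggers a switch operation for RSU $i$ onto $S_k$, contradicting the terminality of $\Pi_f$. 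Hence $\Pi_f$ must be Nash-stable. Individual stability then follows immediately, since every Nash-stable partition is individually stable, as recalled from \cite{HC00} just before the statement.

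The step I expect to require the most care is reconciling the preference relation with the \emph{history set} $h(i)$ embedded in $r_i$. Because $r_i(S) = -\infty$ whenever a non-singleton $S$ lies in $h(i)$, the relation $\succeq_i$ is genuinely time-dependent, so the proposition must be understood with respect to the \emph{terminal} history sets reached at convergence. I would make this explicit: at convergence each $h(i)$ has reached its final value, and the Nash condition is evaluated against these final preferences. Any coalition $S_k \cup \{i\}$ for which $r_i$ evaluates to $-\infty$ cannot witness a Nash violation, since $r_i(S_{\Pi_f}(i)) \ge -\infty$ always holds, while singleton deviations, i.e. $S_k = \emptyset$, are never barred by the history (the branch $|S|=1$ in (\ref{eq:pref1}) keeps $r_i(\{i\})$ finite). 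This guarantees that ``no switch possible'' and ``Nash-stable'' coincide exactly under the relation actually used by the algorithm, which is the only place where the contradiction argument could otherwise leave a gap.

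Overall the argument is short precisely because the switch operation in Definition~\ref{def:switch} is the operational dual of the Nash-stability condition, and Theorem~\ref{th:one} supplies the existence of a terminal partition; the substantive content lies entirely in verifying that completeness of $\succeq_i$ and the finite terminal history sets make these two notions equivalent.
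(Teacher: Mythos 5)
Your proof is correct and follows essentially the same route as the paper's: assume $\Pi_f$ is not Nash-stable, conclude that some RSU could still perform a switch operation, and contradict the convergence guaranteed by Theorem~\ref{th:one}, with individual stability following from \cite{HC00}. The only difference is that you make explicit two points the paper leaves implicit --- the use of completeness of $\succeq_i$ to turn the failure of $S_{\Pi_f}(i) \succeq_i S_k \cup \{i\}$ into a strict preference, and the fact that Nash-stability here must be read relative to the history-dependent preference $r_i$ evaluated at the terminal history sets $h(i)$ --- both of which are accurate refinements rather than deviations.
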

\begin{proof}
 Assume that the partition $\Pi_f$ resulting from the proposed algorithm is not Nash-stable. Consequently, there exists an RSU $i \in \mathcal{N}$, and a coalition $S_k \in \Pi_f$ such that $S_k \cup \{i\} \succ_i S_{\Pi_f}(i)$, hence, RSU $i$ can perform a \emph{switch} operation which contradicts with the fact that $\Pi_f$ is the result of the convergence of the proposed algorithm (Theorem~\ref{th:one}). Consequently, any partition $\Pi_f$ resulting from the coalition formation phase of the proposed algorithm is Nash-stable, and, hence, by \cite{HC00}, this resulting partition is also individually stable.
\end{proof}
 \begin{table}[!t]
  \centering
  \caption{
    \vspace*{-0.2em}The proposed coalition formation algorithm for roadside units~(RSUs) cooperation.}\vspace*{-1em}
    \begin{tabular}{p{8cm}}
      \hline
      \textbf{Starting Network} \vspace*{.5em} \\
      \hspace*{1em}The RSUs in the network are organized into an initial partition \vspace*{.4em}\\ \hspace*{1em}$\Pi_{\textrm{initial}}=\{S_1,\ldots,S_k\}$. At the beginning  $\Pi_{\textrm{initial}}$ = $\mathcal{N}$. \vspace*{.4em}\\
\textbf{Coalition Formation Algorithm with Three Stages} \vspace*{.1em}\\
\hspace*{1em}\emph{Stage I - Neighbor Discovery:}   \vspace*{.1em}\\
\hspace*{1.5em}Each RSU discovers neighboring RSUs as well as the current\vspace*{.1em}\\
\hspace*{1.5em}network structure $\Pi_{\textrm{initial}}$.\vspace*{.1em}\\
\hspace*{1em}\emph{Stage II -  Coalition Formation:}\vspace*{.1em}\\
\hspace*{2em}In this stage, the RSUs engage in coalition formation as follows:\vspace*{.2em}\\
\hspace*{3em}\textbf{repeat}\vspace*{.2em}\\
\hspace*{3em}For every RSU $i \in \mathcal{N}$, given any current partition $\Pi_{\textrm{current}}$\vspace*{.2em} (in\vspace*{.2em}\\
\hspace*{3em}the first round $\Pi_{\textrm{current}} = \Pi_{\textrm{initial}}$).\vspace*{.2em}\\
\hspace*{4em}a) RSU $i$ investigates possible switch operations using the\vspace*{.2em}\\
\hspace*{4em}preferences given by (\ref{eq:pref1})\vspace*{.2em}\\
\hspace*{4em}b) If possible, RSU $i$ performs a switch operation as follows:\vspace*{.2em}\\
\hspace*{5em}b.1) RSU $i$ updates its history $h(i)$ by adding coalition\vspace*{.2em}\\ \hspace*{5em}$S_{\Pi_{\textrm{current}}}(i)$, before leaving it.\vspace*{.2em}\\
\hspace*{5em}b.2) RSU $i$ leaves its current coalition $S_{\Pi_{\textrm{current}}}(i)$.\vspace*{.2em}\\
\hspace*{5em}b.3) RSU $i$ joins the new coalition that improves its payoff.\vspace*{.2em}\\
\hspace*{3em}\textbf{until} convergence to a final Nash-stable partition $\Pi_{f}$.\vspace*{.2em}\\
\hspace*{1em}\emph{Stage III - Cooperative V2R Communication}   \vspace*{.1em}\\
\hspace*{2em}a) The current network is partitioned using $\Pi_{\textrm{f}}$.\vspace*{.1em}\\
\hspace*{2em}b) The coalitions of RSUs operate using the cooperative protocol\vspace*{.1em}\\
\hspace*{2em}discussed in Section~\ref{sec:sysmodel}.\vspace*{.1em}\\
\hspace*{1em}\emph{Adaptation to environmental changes (periodic process)}\vspace*{.1em}\\
\hspace*{2em}In the presence of environmental changes, such as a change in\vspace*{.1em}\\
\hspace*{2em}the vehicle traffic, every period of time $\Psi$, the three stages of the\vspace*{.1em}\\
\hspace*{2em}algorithm are repeated to allow the RSUs to self-organize and\vspace*{.1em}\\
\hspace*{2em}adapt the network structure to these environmental changes.\vspace*{.1em}\\

   \hline
    \end{tabular}\label{tab:alg}
\end{table}

Following the formation of the coalitions and the convergence of the coalition formation phase to a Nash-stable partition, the last phase of the algorithm entails the actual cooperative V2R communication within every formed coalition as explained in Sections~\ref{sec:sysmodel} and \ref{sec:gmodel}. The proposed algorithm is summarized in Table~\ref{tab:alg}.

\subsection{Adaptation to Environmental Changes and Implementation}

Using the algorithm proposed in Table~\ref{tab:alg}, the RSUs can adapt the network structure to environmental changes. One important environmental change that can occur in V2R networks is a change in the traffic passing by every RSU. Such a change automatically modifies the meeting possibilities among the vehicles traveling between the RSUs  and, thus, there is a need for the network to re-organize. For this purpose, the three stages of the algorithm shown in Table~\ref{tab:alg} are repeated periodically over time to adapt to any changes that have occurred in the environment. As per Theorem~\ref{th:one} and Proposition~\ref{prop:one}, regardless of the starting position, the players will always self-organize into a Nash-stable partition, even after any environmental change. In brief, in a changing environment, the proposed algorithm in Table~\ref{tab:alg} is repeated every period of time $\Psi$. This period of time is chosen depending on how rapidly the environment is changing, e.g., for rapidly changing environments $\Psi$ is chosen to have a small value while for static environments it can have a large value. Note that, every $\Psi$, the history set $h(i)$ for all RSUs $i\in \mathcal{N}$ is reset.

To implement the proposed algorithm, a distributed approach can be used. For instance, every switch operation can be taken by the RSUs individually without relying on any centralized entity. For discovering their neighbors,  the RSUs can utilize well-known algorithms such as those discussed in \cite{ADHOC00} or \cite{ZH00}. Further, the RSUs are required to evaluate their potential payoff as per (\ref{eq:ega}) in order to make an accurate switch operation. For evaluating this payoff, the RSUs can communicate over the underlying network infrastructure which, in general, can enable a reliable communication \cite{VAPP}. By doing so, the RSUs can exchange the needed information such as their estimates on the average vehicles passing in each direction, their location as well as the permutation of classes that can maximize the utility in (\ref{eq:util}). Although at first glance, finding the permutation of classes that maximizes (\ref{eq:util}) seem to be complex, this complexity is reduced due to the fact that both the number of data classes in the network and the sizes of the formed coalition sizes (due to cooperation cost) are generally small. Further, given a present partition $\Pi$, for every RSU, the computational complexity of finding its next coalition, i.e., performing a switch operation, is easily seen to be $O(|\Pi|)$, and in the worst case (i.e., when all the players are non-cooperative), $|\Pi| = N$.   Finally, in changing environments, as the algorithm is repeated periodically, the complexity of the coalition formation algorithm is comparable to the one in the static environment, but with more runs of the algorithm.

\section{Simulation Results and Discussions}\label{sec:sim}

We set up the following network for simulations: We consider a $3$~km$\times 3$~km square area within which the RSUs are randomly deployed. The total number of data classes in the network is set to $L=3$~classes with the corresponding set being $\mathcal{C}=\{c_1,c_2,c_3\}$ and the weights given by $w_{c_1} = 0.9,\ w_{c_2} = 0.8,\ $ and $w_{c_3}=0.7$. The total number of vehicles $K_{ij}$ initiating from any RSU $i\in \mathcal{N}$ in the direction of any other RSU $j \in \mathcal{N}$ is considered equal at all directions, i.e., $K_{ij}=K_i\ \forall j \in \mathcal{N}$ s.t. $j \neq i$. Then, this number of vehicles $K_i$ for an RSU $i$ is selected from a uniform distribution over the integers  such that $K_{i} \le 25, \forall i \in \mathcal{N}$. The average number of chunks of data downloaded by any vehicle $k$ from any RSU $i \in \mathcal{N}$ is set to $P_{k,i}=10\ \forall i,k$.  The price per effective data unit is set to $\beta=1$ while the price factor for the cost is set to $\alpha =10$.  Unless stated otherwise, the fraction of vehicles that can meet when the RSUs are distant of $1$~km is set to $\delta=0.8$. All of the statistical results presented in this section are averaged over the random positions of the RSUs as well as the random traffic pattern (random average number of vehicles $K_i$ initiating from any RSU $i\in \mathcal{N}$ within the above-mentioned range).

\begin{figure}[!t]
\begin{center}
\includegraphics[width=8cm]{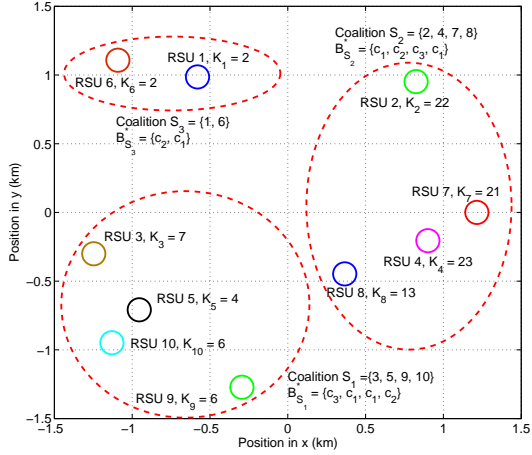}
\end{center}\vspace{-0.8cm}
\caption {A snapshot of the coalitional structure $\Pi_f=\{S_1,S_2,S_3\}$ resulting from the proposed algorithm for a network with $N=10$~RSUs, $\delta=0.8$, and different number of vehicles initiating at every RSU (in this figure, the number of vehicles starting at RSU $i$ is denoted by $K_i$ and assumed the same for all directions inside the coalition). For each coalition $S$, the utility-maximizing data classes permutation tuple $\mathcal{B}_S^*$ is shown.} \label{fig:snap}
\end{figure}

In Fig.~\ref{fig:snap}, we show a snapshot of the network structure $\Pi_f = \{S_1,S_2,S_3\}$ resulting from the proposed algorithm for a vehicular network with $N=10$ randomly deployed RSUs.  Fig.~\ref{fig:snap} shows $3$ different coalitions which are formed among the different RSUs as a result of the proposed coalition formation algorithm. In this figure, for every RSU $i$ belonging to a coalition $S_k \in \Pi_f$, we show the average total number of vehicles $K_i$ starting at $i$. In addition, for every coalition $S_k \in \Pi_f$, we show the corresponding utility-maximizing tuple $\mathcal{B}_{S_k}^*$ of data classes obtained through coordination among the cooperating RSUs. First, we notice that that all three formed coalitions encompass groups of nearby RSUs with comparable vehicle traffic. This corroborates the natural result that RSUs that have an almost similar traffic and that are closely located are more apt to form a coalition. Further, the partition in Fig.~\ref{fig:snap} is Nash-stable since no RSU has an incentive to switch its current coalition. For example, RSU $9$ has a payoff of $\phi_9(S_1)=553.5$ when being part of coalition $S_1=\{3,5,9,10\}$, and by switching to act non-cooperatively, this payoff drops to $\phi_9(\{9\})=486$. Similarly, if RSU $9$ wants to switch from $S_1$ to join with coalition $S_3 =\{1,6\}$, its payoff drops to $\phi_9(\{1,6,9\})=281.4$.

Finally, although by joining with coalition $S_2=\{2,4,7,8\}$ RSU $9$ can significantly improve its utility to $\phi_9(\{2,4,7,8,9\})=1604.2$,  the members of $S_2$ do not agree on the joining of RSU $9$ since this would decrease the payoffs of RSUs $2$, $4$, and $7$ which  drop from $\phi_2(S_2)= 2037.3$, $\phi_4(S_2)=2118.3$, and $\phi_7(S_2)=1956.3$ to $\phi_2(\{2,4,7,8,9\})= 1597.2$, $\phi_4(\{2,4,7,8,9\})= 1599.2$, and $\phi_7(\{2,4,7,8,9\})=1602.1$, respectively. This drop is a consequence of the cost of cooperation, the distance of RSU $9$ to the members of coalition $S_2$, as well as the low possibility of V2V content-sharing on the links between RSU $9$ and the members of $S_2$ (due to the relatively lower average number of vehicles at RSU $9$ compared to that at the other members of $S_2$). In a nutshell, Fig.~\ref{fig:snap} provides an insight on how the RSUs can self-organize using the proposed coalition formation algorithm.

\begin{figure}[!t]
\begin{center}
\includegraphics[width=8cm]{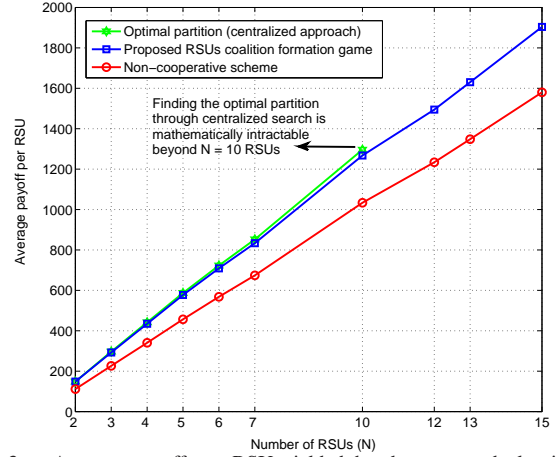}
\end{center}\vspace{-0.8cm}\caption {Average payoff per RSU yielded by the proposed algorithm, the non-cooperative scheme, and the optimal partition (centralized approach) as a function of the number of RSUs $N$ for networks with $L=3$~data classes and $\delta=0.8$.} \label{fig:perf}
\end{figure}

In Fig.~\ref{fig:perf}, we show the average payoff (averaged over all RSU positions and all vehicle traffic patterns) achieved per RSU for a network with $L=3$~data classes and $\delta=0.8$ as the number of RSUs in the network, $N$, increases. In this figure, we compare the performance of the proposed algorithm to that of the non-cooperative case as well as the optimal partition, i.e., the partition that maximizes the average payoff per RSU, found by a centralized entity through exhaustive search. Fig.~\ref{fig:perf} shows that, as the number of RSUs $N$ increases, the performance of the proposed scheme, the non-cooperative scheme, and the optimal partition, increases. The increase in the performance of the non-cooperative scheme with the network size is \emph{solely} due to the existence of additional data traffic yielded by the additional RSUs. For the proposed algorithm and the optimal partition, the increase in the average payoff per RSU with the network size $N$ is also a result of the increased possibility of finding better cooperating partners as the network grows. At all network sizes, the proposed coalition formation algorithm maintains a performance advantage compared to the non-cooperative case. This advantage ranges between $20.5\%$ and $33.2\%$ of improvement, respectively, at $N=15$ and $N=2$  relative to the average non-cooperative payoff.

Moreover, compared to the optimal solution, clearly the proposed coalition formation algorithm achieves a highly comparable performance with a performance gap not exceeding $2.3\%$ with respect to the optimal solution at $N=10$~RSUs. This shows that, by using the proposed distributed coalition formation algorithm, the RSUs can achieve a performance that is very close to optimal. Note that, for more than $10$~RSUs, finding the optimal partition by exhaustive search is mathematically and computationally intractable. Finally, Fig.~\ref{fig:perf} also shows that, as more RSUs are deployed in the network, more traffic is served and, thus, the average payoff per RSU increases significantly for all three schemes. Fig.~\ref{fig:perf} shows that adding $5$~RSUs increases the average payoff per RSU of around $400\%$ to $500\%$ for both the non-cooperative and the cooperative cases which provides incentives for deploying additional RSUs and, thus, generating more revenues per RSU.

\begin{figure}[!t]
\begin{center}
\includegraphics[width=8cm]{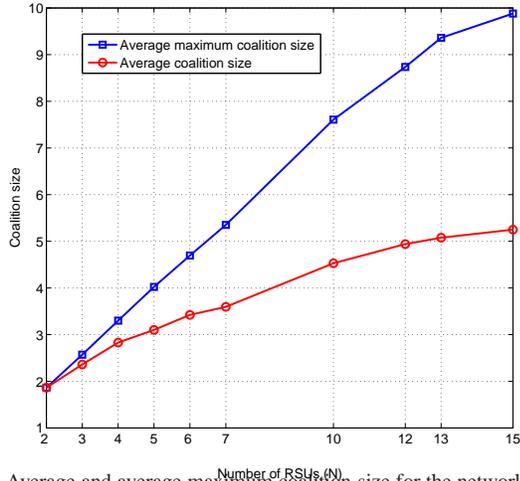}
\end{center}\vspace{-0.8cm}\caption {Average and average maximum coalition size for the networks yielded by the proposed coalition formation algorithm as a function of the number of RSUs $N$ for networks with $L=3$~data classes and $\delta=0.8$.} \label{fig:size}
\end{figure}

In Fig.~\ref{fig:size}, we show the average and average maximum coalition size (averaged over the random positions of the RSUs and the random vehicle traffic pattern) resulting from the proposed  coalition formation algorithm as the number of RSUs, $N$, increases, for a network with $L=3$~data classes and $\delta=0.8$. Fig.~\ref{fig:size} shows that  both the average and the average maximum coalition size increase with the number of RSUs. This is mainly due to the fact that, as $N$ increases, the number of candidate cooperating partners increases, thus, increasing the average size of the formed coalitions. Fig.~\ref{fig:size} also shows that the formed coalitions have a moderate to large size, with the average and average maximum coalition size ranging, respectively, from around $1.9$ (for both) at $N=2$ to around $5.25$ and $9.9$ at $N=15$. Therefore, we can conclude that, in general,  after coalition formation, the resulting network of RSUs is mainly composed of a small number of relatively large coalitions rather than a large number of small coalitions.

\begin{figure}[!t]
\begin{center}
\includegraphics[width=8cm]{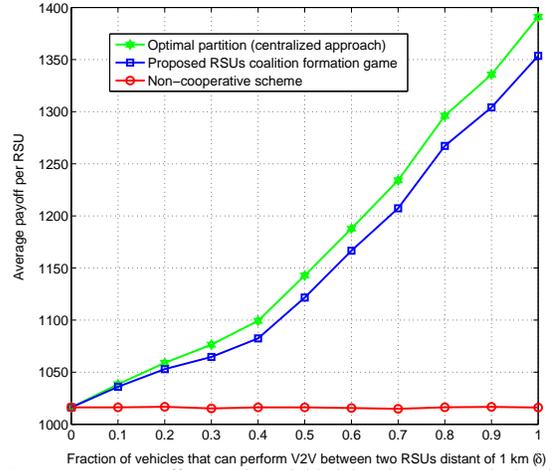}
\end{center}\vspace{-0.8cm}\caption {Average payoff per RSU yielded by the proposed algorithm, the optimal scheme, and the non-cooperative scheme as a function of the fraction $\delta$ of vehicles that can meet and engage in V2V content-sharing between any two RSUs distant $1$~km for a network with $N=10$~RSUs and $L=3$~data classes.} \label{fig:perfpar}
\end{figure}

In Fig.~\ref{fig:perfpar}, we show the average payoff achieved per RSU for a network with $N=10$~RSUs and $L=3$~data classes as the parameter $\delta$ varies. The parameter $\delta$, as given in (\ref{eq:conmod}), represents the fraction of vehicles that can meet and engage in V2V content sharing between any two RSUs distant of $1$~km. Fig.~\ref{fig:perfpar} shows that, as $\delta$ increases, the performance of the proposed scheme as well as of the optimal solution increase while that of the non-cooperative scheme remains constant at all $\delta$. In the non-cooperative scheme, due to the fact that the RSUs are unaware of the underlying V2V communications network, the performance of the non-cooperative approach is not affected by the variations in $\delta$. In contrast, for the proposed coalition formation algorithm and the optimal solution, as $\delta$ increases, it becomes more beneficial for the RSUs to exploit the V2V communications network, and, thus, the performance gains in terms of average payoff per RSU increases with $\delta$. In this context, the proposed coalition formation algorithm presents a performance advantage over the non-cooperative scheme at all $\delta$. This advantage is increasing with $\delta$ and it reaches up to around $25\%$ of improvement relative to the non-cooperative approach at $\delta=1$. Moreover, Fig.~\ref{fig:perfpar} shows that, at all $\delta$ the proposed coalition formation algorithm yields a near optimal performance as the performance gap with respect to the optimal solution does not exceed $2.8\%$ (achieved at $\delta=1$). Finally, as can be seen in Fig.~\ref{fig:perfpar}, at $\delta=0$, there is practically no possibility that V2V communications take place, and, hence, at this value the coalition formation scheme reduces to a non-cooperative approach.

\begin{figure}[!t]
\begin{center}
\includegraphics[width=8cm]{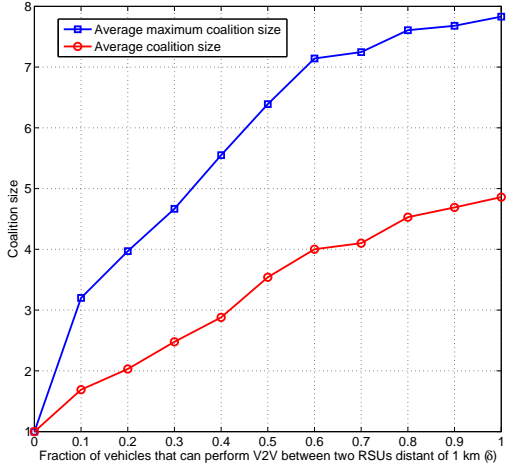}
\end{center}\vspace{-0.8cm}\caption {Average and average maximum coalition size for the networks yielded by the proposed coalition formation algorithm as a function of the fraction $\delta$ of vehicles that can meet and engage in V2V content-sharing between any two RSUs distant of $1$~km for a network with $N=10$~RSUs and $L=3$~data classes.} \label{fig:sizepar}
\end{figure}

In Fig.~\ref{fig:sizepar}, we show the average and average maximum coalition size resulting from the proposed  coalition formation algorithm for a network with $N=10$~RSUs and $L=3$~data classes as the parameter $\delta$ varies. Fig.~\ref{fig:sizepar} shows that  both the average and average maximum coalition size increase with $\delta$. This is mainly due to the fact that, as $\delta$ increases, the amount of V2V data exchange increases and, hence, cooperation becomes highly desirable. From Fig.~\ref{fig:size}, we note that, when $\delta \le 0.4$, the network structure tends towards a partition composed of large number of small coalitions with the average and average maximum coalition sizes not exceeding $3$ and $5.6$, respectively, at $\delta=0.4$. However, as $\delta$ increases beyond $0.4$, the emergence of large coalitions becomes more likely as the average and average coalition size reach up to $4.9$ and $7.8$, respectively, at $\delta=1$.

\begin{figure}[!t]
\begin{center}
\includegraphics[width=8cm]{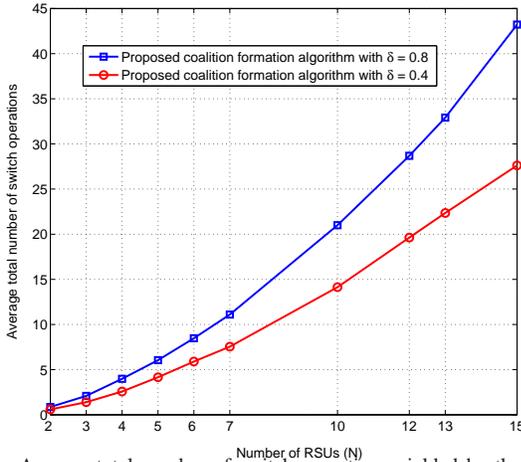}
\end{center}\vspace{-0.8cm}\caption {Average total number of switch operations yielded by the proposed coalition formation algorithm prior to convergence as a function of the number of RSUs $N$ for networks with $L=3$~data classes and for the cases of $\delta=0.4$ and $\delta=0.8$.} \label{fig:switch}
\end{figure}

In Fig.~\ref{fig:switch}, we show the average total number of switch operations  (averaged over the random positions of the RSUs and the random vehicle traffic pattern) that occur prior to the convergence of the proposed algorithm as the number of RSUs, $N$, increases, for a network with $L=3$~data classes and for the cases of $\delta=0.4$ and $\delta=0.8$. Fig.~\ref{fig:switch} shows that, for both $\delta=0.4$ and $\delta=0.8$, the average total number of switch operations increases with the network size. This is mainly due to the fact that, as $N$ increases, the possibilities for cooperation increase, yielding an increased number of switch operations. In this figure, we remark the the total number of switch operations required for the convergence of the coalition formation algorithm varies from $0.6$ and $0.87$ at $N=2$ to around $27.6$ and $43.2$ at $N=15$~RSUs for the cases of $\delta=0.4$ and $\delta=0.8$, respectively. This result implies that for a network of $N=15$~RSUs with $\delta=0.8$ an average of $3$ switch operations per RSU are required before convergence, which demonstrates that the convergence time of the proposed algorithm is quite reasonable. Finally, in Fig.~\ref{fig:switch}, we note that as the possibilities of V2V content-sharing decreases, i.e., as $\delta$ decreases, the total number of switch operations decreases. For instance, at all network sizes, the total number of switch operations required for convergence at $\delta=0.4$ is approximately two-third of that required for the convergence of the algorithm at $\delta=0.8$.

\begin{figure}[!t]
\begin{center}
\includegraphics[width=8cm]{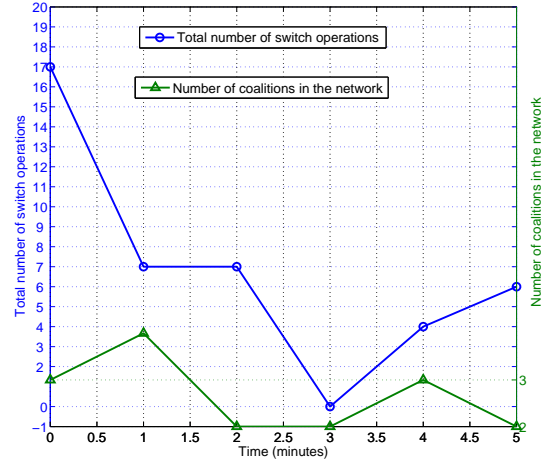}
\end{center}\vspace{-0.8cm}\caption {Evolution of the network structure over time as the vehicle traffic pattern (the average number of vehicles $K_i$ served by each RSU) varies over a period of $5$~minutes for a network with $N=10$~RSUs.} \label{fig:mobconv}
\end{figure}

Fig.~\ref{fig:mobconv} shows how the structure of a vehicular network with $N=10$~RSUs evolves and self-adapts over time over a period of $5$ minutes, while the average traffic $K_i$ present at each RSU $i\in\mathcal{N}$ varies every $1$~minute. The proposed coalition algorithm is repeated periodically by the RSUs every $\Psi=1$ minute, in order to provide self-adaptation to the change in the traffic pattern. The network starts with a non-cooperative structure made up of $10$ independent RSUs. In the first step, as shown in Fig.~\ref{fig:mobconv} at $t=0$, the network self-organizes into $3$ coalitions through a total of around $17$ switch operations. As time evolves, the RSUs perform various switch operations as the network topology changes with the emergence of new coalitions and the departure of other coalitions. For example, after $1$~minute, the RSUs perform a total of $7$ switch operations as the network structure changes from a partition composed of $3$ coalitions at $t=0$ to a partition composed of $4$ coalitions at $t=1$. Between $t=2$~minutes and $t=3$~minutes the network structure remains unchanged as no switch operations occur. Finally, once all $5$ minutes have elapsed the network structure is made up of $2$ coalitions after a total of $41$ switch operations have occurred since $t=0$.

\section{Conclusion}\label{sec:conc}

We have introduced a novel model for distributed cooperation among the roadside units in a vehicular network. In the proposed model, any group of cooperating roadside units can coordinate the classes of data they transmit to the vehicles, and, thus, improve the diversity of the data circulating in the network. In addition, using the proposed cooperation model, any group of cooperating roadside units can improve their revenue by exploiting the underlying vehicle-to-vehicle content-sharing network. The proposed  cooperation model for roadside units has been formulated as a coalition formation game with transferable utility and a coalition formation algorithm has been proposed. Using the proposed coalition formation algorithm, the roadside units can take individual  decisions to join or leave a coalition while maximizing their payoffs. The payoff accounts for the gains from cooperation, in terms of increased revenues as well as the cost of coordination. We have studied the properties and characteristics of the proposed model and showed that the proposed coalition formation algorithm always converges to a Nash-stable partition. Further, by  repeating the proposed coalition formation algorithm periodically, the roadside units can take autonomous decisions for adapting the network structure to environmental changes such as a change in the vehicle traffic. Simulation results show how the proposed algorithm allows the roadside units to self-organize into independent coalitions, while improving the performance, in terms of the average payoff per roadside unit between $20.5\%$ and $33.2\%$ (depending on different scenarios) relative to the non-cooperative scheme.

\nocite{WS00}
\renewcommand{\baselinestretch}{1}
\bibliographystyle{IEEEtran}
\bibliography{references}

\end{document}